\documentclass[11pt]{article}

\usepackage[margin=1in]{geometry}
\usepackage{amsmath,amssymb,amsthm,mathtools}
\usepackage{enumitem}
\usepackage{microtype}
\usepackage{sn-preamble}

\newcommand{\dc}{\operatorname{dc}}

\newcommand{\sgn}{\operatorname{sign}}
\newcommand{\stat}{\textsf{STAT}}
\newcommand{\opt}{\textsf{OPT}}

\title{A Separation Between Distribution-Free SQ Learning and Dimension Complexity}
\author{Shyamal Patel\thanks{Email: \texttt{shyamalpatelb@gmail.com}} \\ UT Austin}
\date{}

\begin{document}

\pagenumbering{gobble}
\maketitle

\begin{abstract}
We show that there exists a class of boolean functions $\mathcal{C}$ such that $(i)$ there is a distribution-independent statistical query algorithm for learning $\mathcal{C}$ that makes a polynomial number of queries of inverse polynomial tolerance and $(ii)$ for any set of functions $\Phi_1, \dots, \Phi_r$ such that for all $f \in \mathcal{C}$ we can write $f(x) = \sgn \left( \sum_{i = 1}^r w_i \Phi_i(x) \right)$ for some set of weights $w_i \in \mathbb{R}$, we must have that $r \geq n^{\omega(1)}$. This gives a superpolynomial separation between dimension complexity and the query complexity of distribution-free learning in the statistical query model, negatively answering a question of Feldman, Kamath, and Srebro \cite{FKS}.

Our construction $\mathcal{C}$ is a subclass of DNFs, and the proof is a simple consequence of recent progress on agnostically learning conjunctions \cite{DKR, CPS} and the work of Razborov and Sherstov on the sign rank of DNFs \cite{RS}.
\end{abstract}

\section{Introduction}
A popular approach in machine learning and computational learning theory is the kernel method: To learn a class $\mathcal{C}$ of Boolean functions over $\bits^n$, it suffices to find a set of functions $\Phi_1, \dots, \Phi_r$ such that for all $f \in \mathcal{C}$, we can write 
	\[f(x) = \sign \left( \sum_{i =1}^r w_i \Phi_i(x) \right)\]
for some weights $w_i \in \mathbb{R}$. The smallest value $r$ such that functions $\Phi_1, \dots, \Phi_r$ as above exist is called the \emph{dimension complexity} of $\calC$. Since we have efficient algorithms for learning halfspaces, this then yields an efficient algorithm for learning $\mathcal{C}$ in $\poly(r,n)$ time as long as the functions $\Phi_i$ can be computed efficiently \cite{blumer1989learnability, blum1998polynomial}.

In this note, we consider a question of Feldman, Kamath, and Srebro \cite{FKS}, which asks if the converse is true: if a class has large dimension complexity, is it hard to learn in the distribution-free statistical query model (cf. \Cref{sec:sq})? Our main result is a negative resolution of this question.

\begin{theorem}
\label{thm:main}
There exists a class $\mathcal{C}$ of Boolean functions over $\bits^n$ that can be efficiently learned in the distribution-free statistical query model to error $0.01$, but has dimension complexity $n^{\omega(1)}$. 
\end{theorem}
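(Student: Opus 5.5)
We will take $\mathcal{C}$ to be a subclass of DNFs built so that the sign-rank lower bound of Razborov and Sherstov \cite{RS} still applies, while each member is close to a conjunction under any distribution, which is what makes it SQ-learnable. Concretely, Razborov and Sherstov exhibit a polynomial-size DNF $F(x,y)$ on $O(m)$ variables (built from the Minsky--Papert function composed with a pattern-matrix gadget) whose communication matrix has sign rank $2^{\Omega(m^{1/3})}$. Fixing $y$ yields a family $\{F(\cdot,y)\}_y$ of DNFs in $x$, and its dimension complexity equals the sign rank of that matrix. To obtain $n^{\omega(1)}$ dimension complexity while keeping learnability polynomial, we embed this on $m = \mathrm{polylog}(n)$ (e.g.\ $m=\log^{4} n$) of the $n$ variables, so the sign rank is $2^{\Omega(\log^{4/3} n)} = n^{\omega(1)}$. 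The size of each DNF is then $\mathrm{poly}(m)$, and padding or restricting to a subclass preserves the lower bound, since dimension complexity is monotone under taking subclasses and under restriction of the domain.

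For the learning side, we will use the recent agnostic learners for conjunctions \cite{DKR, CPS}, which run in time roughly $2^{\tilde O(n^{1/3})}$ and are implementable with statistical queries. Applied to conjunctions on $m$ relevant variables (or over a domain of effective dimension $m$), this gives $2^{\tilde O(m^{1/3})} = \mathrm{poly}(n)$ queries of inverse-polynomial tolerance when $m = \mathrm{polylog}(n)$. We tune $m$ so that $m^{1/3}\log m = O(\log n)$ but $m^{1/3} = \omega(\log n)$ fails. That tension is the crux: the RS lower bound is $2^{\Omega(m^{1/3})}$ and the upper bound is $2^{\tilde O(m^{1/3})}$, so we need the lower bound to beat $\mathrm{poly}(n)$ by more than the log factors. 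A cleaner route is to make the class of functions only ``look'' large to dimension complexity while being learnable via a reduction to agnostic learning of conjunctions on $m$ variables with $m$ chosen as $\log^{3} n \cdot \omega(1)$ against polylog slack. If the $\tilde O$ hides $\mathrm{polylog}(m)$ factors, the parameters do not obviously separate.

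So we will instead exploit the restriction to DNFs: the RS DNF is an OR of $\mathrm{poly}(m)$ terms. If $f$ is an OR of $t$ terms, then under any distribution either $f$ is $0.01$-close to constant, or some single term $T$ has $\Pr[T=1]\ge \Pr[f=1]/t$. The naive weak-learning argument with boosting requires the distribution-free guarantee, which SQ boosting supports. The plan is therefore:
\begin{enumerate}[label=(\arabic*)]
\item show that every $f\in\mathcal{C}$ has, under every distribution, a conjunction (or constant) with advantage $1/\mathrm{poly}(n)$;
\item run the agnostic conjunction learner on the $m=\mathrm{polylog}(n)$ relevant coordinates to find such a weak hypothesis, using $2^{\tilde O(m^{1/3})}$ SQs;
\item apply SQ boosting (in the Aslam--Decatur style) to reach error $0.01$.
\end{enumerate}
Because the agnostic learner only needs to compete with the best conjunction up to additive $\epsilon = 1/\mathrm{poly}$, the weak advantage survives.

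The main obstacle is the parameter balance in step (2) against the sign-rank bound. We will choose $m$ so that the learner's cost is $n^{O(1)}$, i.e.\ $m^{1/3}\,\mathrm{polylog}(m)\le \log n$, and verify that the RS bound $2^{\Omega(m^{1/3})}$ is still superpolynomial. If the $\tilde O$ factors kill this, the fallback is to use a variant of the RS DNF whose sign-rank exponent exceeds $1/3$ relative to the number of conjunction variables the learner must handle, for instance by letting the learner exploit the fixed pattern-matrix structure so that it only learns over a restricted set of candidate terms.
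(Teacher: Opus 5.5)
\textbf{There is a genuine gap, and it is the one you flag as ``the crux'' without resolving it.}

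In step (2) you run the agnostic conjunction learner over all $m$ input coordinates of the Razborov--Sherstov function. You need $\eta = 1/\mathrm{poly}(m)$ to preserve the weak advantage, so this costs $2^{m^{1/3}\mathrm{polylog}(m)}$. The sign-rank bound you invoke is $2^{\Omega(m^{1/3})}$ in the same parameter. Requiring $\mathrm{poly}(n)$ queries forces $m^{1/3}\mathrm{polylog}(m) = O(\log n)$. Then the certified lower bound is only $2^{\Omega(\log n/\mathrm{polylog}\log n)} = n^{o(1)}$.

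Even with no polylog loss, exponents of the form $\Theta(m^{1/3})$ on both sides cannot give a superpolynomial separation. The learner's exponent has to be polynomially smaller than the sign-rank exponent. Your tuning condition (``$m^{1/3}\log m = O(\log n)$ but \dots'') cannot be satisfied in the direction you need. The ``fallback'' of exploiting the pattern-matrix structure is a hope, not an argument.

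\textbf{The missing idea is the block structure of the RS function.} Take $N = m^3$ variables split into blocks $B_1,\dots,B_m$ of size $m^2$. For each fixed $y$, $-f_m(\cdot,y)$ is a read-once DNF $T_1(x|_{B_1})\vee\cdots\vee T_m(x|_{B_m})$ whose $i$th term reads only block $B_i$. Let $\mathcal{C}$ be all DNFs with one term per block.

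Your step (1) still holds: some $T_i$ or a constant has advantage $\Omega(1/m)$. The key difference is that the comparator $T_i$ depends only on $x|_{B_i}$. So you run the agnostic learner separately on each block, on $m^2$ variables, at cost $2^{(m^2)^{1/3}\mathrm{polylog}(m)} = 2^{\tilde{O}(N^{2/9})}$. You then output the best of $h_1,\dots,h_m,\pm 1$.

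The sign rank is $2^{\Omega(m)} = 2^{\Omega(N^{1/3})}$. The exponents $N^{2/9}$ and $N^{1/3}$ are polynomially apart, so the polylog factors no longer matter. Take $N \approx (\log n/\mathrm{polylog}\log n)^{9/2}$, padding the class to $n$ variables. Then Aslam--Decatur boosting gives $\mathrm{poly}(n)$ queries of $1/\mathrm{poly}(n)$ tolerance, while $\dc(\mathcal{C}) = 2^{\Omega(N^{1/3})} = n^{\omega(1)}$.

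A minor point: the dichotomy in step (1) should be at scale $\Theta(1/t)$, not ``$0.01$-close to constant''. If neither constant has advantage $1/(10t)$, then some term has $\Pr[T_i = 1] \ge 1/(3t)$, which gives $\Pr[T_i = f] \ge 1/2 + 1/(5t)$.
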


More precisely, our class will have quasi-polynomial dimension complexity. Notably, if we relax the model so the marginal is known to the learner, stronger separations are known \cite{sherstov2008communication}; that said, algorithms are not given such access in the standard statistical query model.

At a high level, $\mathcal{C}$ will correspond to a family of read-once DNFs. A result of Razborov and Sherstov on the sign rank of a related matrix will then imply that this class has large dimension complexity \cite{RS}. On the other hand, recent progress on agnostically learning conjunctions will yield the desired efficient learning algorithms \cite{DKR, CPS}.

\medskip
\textbf{Statement on AI Usage.} The ideas in this note were generated by the author without the use of AI tools.

\section{Preliminaries}
\paragraph{Conventions.}
We denote random variables using bold faced letters, e.g. $\bx$. Boolean functions will be $\{\pm 1\}$-valued. 
We will denote the dimension complexity of a class by $\dc(\calC)$.

\subsection{The Statistical Query Model}
\label{sec:sq}
For completeness, we give a description of the statistical query model of Kearns \cite{kearns1998efficient}. In this model, there is an unknown distribution $\calD$ over $\bits^n$ and unknown function $f \in \mathcal{C}$. Given an error parameter $\eps$, the goal of the learning algorithm is then to output a hypothesis $h$ satisfying 	
\[\Pr_{\bx \sim \calD} [f(\bx) = h(\bx)] \geq 1 - \eps \]		
with probability $0.99$.
To do so, the learning algorithm then gets access to an oracle $\stat_\tau$. The algorithm can query this oracle by choosing a function $\Phi: \bits^n \times \bits \rightarrow [-1,1]$ and a tolerance parameter $\tau \in (0,1)$. The value of $\stat_\tau(\Phi)$ is then an arbitrary real number satisfying 
		\[ \left | \E_{\bx \sim \calD} [ \Phi(\bx,f(\bx)) ] - \stat_\tau(\Phi) \right| \leq \tau.\]

We say that a statistical query algorithm is efficient if it makes $\poly(n)$ queries of tolerance at least $1/\poly(n)$.

We will crucially need the following boosting result

\begin{theorem}[Aslam and Decatur \cite{aslam}]
\label{thm:boosting}
	Suppose that there exists an SQ algorithm that makes $Q$ queries of tolerance at least $\tau$ and outputs a hypothesis such that $\Pr_{\bx \sim \calD} [f(\bx) = h(\bx)] \geq \frac{1}{2} + \gamma$, then there is a statistical query algorithm that makes $O(Q \cdot \poly(\gamma^{-1}, \log(1/\eps)))$ queries of tolerance $\Omega(\tau \eps^2)$ and outputs a hypothesis with error at most $\eps$.
\end{theorem}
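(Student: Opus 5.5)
The plan is to simulate a \emph{smooth} boosting algorithm inside the statistical query model. Smooth boosters include boost-by-majority with capped weights, Servedio's SmoothBoost, and the hybrid scheme of Aslam and Decatur. Such a booster only ever runs the weak learner on reweightings $\calD_t$ of $\calD$ whose density with respect to $\calD$ is bounded by $O(1/\eps)$. That bounded density is what lets every statistical query against $\calD_t$ be answered from statistical queries against $\calD$.

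Concretely, I would fix a smooth booster that runs for $T = \poly(\gamma^{-1}, \log(1/\eps))$ stages. At stage $t$ it defines a weight function $w_t(x,y) \in [0,1]$ that depends only on $y \cdot h_1(x), \dots, y\cdot h_{t-1}(x)$, where $h_1,\dots,h_{t-1}$ are the hypotheses produced so far. The stage-$t$ distribution is $\calD_t(x) \propto \calD(x)\, w_t(x,f(x))$. The booster has two standard properties. First, if every $h_t$ has advantage $\gamma$ on $\calD_t$, then the final hypothesis $h = \sgn(\sum_t h_t)$ has error at most $\eps$ on $\calD$. Second, as long as the current combined hypothesis has error more than $\eps$, the normalizer $\E_{\calD}[w_t] \geq \Omega(\eps)$. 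I would take this combinatorial potential-function analysis from the boosting literature rather than reprove it.

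The main step is the simulation. When the weak learner asks $\stat_\tau(\Phi)$ with respect to $\calD_t$, I would answer it with
\[ \frac{\E_{\calD}[\Phi(\bx,f(\bx))\, w_t(\bx,f(\bx))]}{\E_{\calD}[w_t(\bx,f(\bx))]}. \]
Both the numerator and the denominator are expectations under $\calD$ of $[-1,1]$-valued functions of $(x, f(x))$. Since $w_t$ is computable from the previous hypotheses and the label, these are legitimate queries. I would ask each to tolerance $\delta$. With denominator $D \geq \eps$ and ratio magnitude at most $1$, the ratio estimate has error $O(\delta/\eps)$. The normalizer itself need only be known to within a constant factor, but to keep the error accounting uniform across stages I would take $\delta = \Theta(\tau \eps^2)$. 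This matches the stated tolerance and leaves slack for the stopping test below.

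If the denominator estimate ever drops below about $\eps/2$, the booster instead halts and outputs its current combined hypothesis, which already has error $O(\eps)$. Making this stopping test sound is the subtle point: the test itself is an SQ estimate, so its threshold must be set with margin relative to $\delta$. The query count is $Q$ per stage, plus one normalizing query per stage, across $T$ stages. To get overall success probability $0.99$, I would first amplify the weak learner's confidence to $1 - 1/(100T)$ and then take a union bound over stages.

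I expect the main obstacle to be getting the number of stages down to $\poly(\gamma^{-1}, \log(1/\eps))$ rather than $\poly(1/\eps)$. Plain smooth boosting gives $O(1/(\eps\gamma^2))$ stages. The route I would try first is Aslam and Decatur's two-level scheme. An inner smooth booster with a constant target error (so its smoothness costs only constants) turns the $\gamma$-weak learner into a learner with error $1/4$. An outer boost-by-majority/Schapire-style recursion, of depth $O(\log(1/\eps))$ with the inner learner at each leaf, then drives the error down to $\eps$. The delicate part is tracking how the density bounds compose through the outer level so that the tolerance degrades only to $\Omega(\tau\eps^2)$.
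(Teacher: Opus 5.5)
The paper gives no proof of this statement; it cites Aslam and Decatur, whose method is the one you outline: run a booster and answer each weak-learner query on a reweighted distribution by a ratio of two statistical queries under $\calD$. Your simulation step is sound. Since $|\E_{\calD}[\Phi w_t]| \le \E_{\calD}[w_t]$, $\delta$-accurate estimates of the numerator and denominator give the ratio to within $2\delta/\hat D$, so a normalizer of order $\eps$ costs only a factor $O(\eps)$ in tolerance. Your SmoothBoost-style stopping test is also fine. The gap is the booster itself. The whole argument needs a boosting scheme that both runs $\poly(\gamma^{-1},\log(1/\eps))$ rounds and only calls the weak learner on reweightings with normalizer $\Omega(\eps)$ (or, for a nested scheme, a controlled product of normalizers along each path). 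You never supply one. Your second paragraph takes such a booster ``from the literature'', but your last paragraph concedes that the smooth boosters you name need $O(1/(\eps\gamma^2))$ rounds. That gives $\poly(1/\eps)$ queries and violates the statement. The two-level replacement is then left unanalyzed at exactly the step that determines both the query count and the tolerance.

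The outer level, as written, would also break the query bound. A Schapire-style ternary recursion of depth $\Theta(\log(1/\eps))$ has $3^{\Theta(\log(1/\eps))} = \poly(1/\eps)$ leaves. That recursion meets the bound only because the error roughly squares per level, so depth $O(\log\log(1/\eps))$ suffices. Its tolerance then needs an actual accounting, which would go as follows. After the usual early exits (return $h_1$ if its error is already below $\eps_k$; skip $h_3$ if $\Pr[h_1 \neq h_2]$ is small), a node with target $\eps_k$ reweights with normalizer $\Omega(\eps_k)$ relative to its parent. These normalizers multiply along a root-to-leaf path. With $\eps_{k+1} \approx \sqrt{\eps_k}$, the product is about $\eps^{1+1/2+1/4+\cdots} \approx \eps^2$, up to per-level constants. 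Over $O(\log\log(1/\eps))$ levels those constants compound to a $\polylog(1/\eps)$ factor, which you must control to reach the stated $\Omega(\tau\eps^2)$. If the outer level is instead boost-by-majority, $O(\log(1/\eps))$ rounds is right, but you then need a proof that its reweightings are $O(1/\eps)$-smooth, which is the same missing ingredient. The cleanest repair is a booster known to be both fast and smooth. One example is the multiplicative-weights booster of Barak, Hardt and Kale, which Bregman-projects onto measures of density at least $\eps$ and uses $O(\gamma^{-2}\log(1/\eps))$ rounds. Its projection scalar can be found by binary search with statistical queries, and it needs no stopping test. Your simulation paragraph then gives tolerance $\Omega(\tau\eps)$, stronger than required.
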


\section{The Construction}
Our construction will be parameterized by a value $m$. We partition $[n]$ into $m$ blocks of size $m^2$, $B_1, \dots, B_m$. We then let $\calC$ be the class of all DNF formulas with one term assigned to each block. That is,
\[
   c(x)=T_1(x|_{B_1})\vee T_2(x|_{B_2})\vee\cdots\vee T_m(x|_{B_m}),
\]
where each $T_i$ is a conjunction of literals using only coordinates from
$B_i$. We allow the empty conjunction, which is identically $1$.

Our main goal will then be to show that

\begin{theorem}
\label{thm:main-formal}
The class $\calC$ satisfies

$(i)$ $\dc(\calC) = 2^{\Omega(n^{1/3})}$

$(ii)$ $\calC$ can be learned to error $0.01$ using $2^{\wt{O}(n^{2/9})}$ queries of tolerance $2^{-\wt{O}(n^{2/9})}$.
\end{theorem}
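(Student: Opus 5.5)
Part (i) is a reduction to the Razborov--Sherstov sign-rank lower bound for read-once DNFs. Their theorem says the matrix $M(x,y) = \bigvee_{i=1}^{m}\bigwedge_{j=1}^{m^2}(x_{ij}\wedge y_{ij})$ has sign rank $2^{\Omega(m)}$. We embed this matrix into $\calC$ as follows. For each $y \in \bits^{m^3}$, define $c_y \in \calC$ by letting $T_i$ be the conjunction of the variables $x_{ij}$ over those $j$ with $y_{ij}=1$; this conjunction may be empty. Then $c_y(x) = M(x,y)$, up to the $\pm1$ encoding convention and possibly a global negation, neither of which affects sign rank.

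Now suppose $\Phi_1,\dots,\Phi_r$ witness $\dc(\calC)\le r$. Then each $c_y(x)=\sgn(\sum_k w_k(y)\Phi_k(x))$, so the matrix $[c_y(x)]_{x,y}$ is sign-represented by the rank-$r$ real matrix $\sum_k \Phi_k(x)w_k(y)$. Hence $r \ge 2^{\Omega(m)}$. Since $n=m^3$, this gives $r \ge 2^{\Omega(n^{1/3})}$. The only care needed is in matching the sign/encoding conventions of \cite{RS}, and possibly allowing negated literals or a complement, since dimension complexity is invariant under negating all functions.

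For (ii), the plan is to obtain a weak learner and then boost it with \Cref{thm:boosting}. Fix the target $c=T_1\vee\dots\vee T_m$ and an arbitrary distribution $\calD$. If $\Pr[c=-1]$, the ``false'' side, or $\Pr[c=1]$ is at least $0.99$, a constant hypothesis already gives advantage $\Omega(1)$. A single query estimates this bias, so we may assume both classes have mass bounded away from $0$. Since $c$ is an OR of $m$ terms, some term $T_i$ satisfies $\Pr[T_i(\bx)=1] \ge \Pr[c=1]/m$. Moreover $T_i \Rightarrow c$, so the conjunction $T_i$ has correlation $\Omega(1/m)$ with $c$ in a one-sided sense.

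The key tool is an agnostic learner for conjunctions that works under arbitrary distributions in time $2^{\wt O(n^{1/3}\log(1/\eps))}$ \cite{DKR,CPS}, implemented via statistical queries. These algorithms are $L_1$ polynomial regression with degree $\wt O(n^{1/3})$ approximators, which are SQ-implementable. We apply it within the block $B_i$, which has dimension $m^2 = n^{2/3}$. The degree then becomes $\wt O((m^2)^{1/3}) = \wt O(n^{2/9})$, which explains the $2^{\wt O(n^{2/9})}$ query bound.

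The concrete procedure is:
\begin{enumerate}[label=(\arabic*)]
\item For each block $i$, run the agnostic conjunction learner restricted to coordinates $B_i$. The goal is a hypothesis $h_i$ whose agreement with $c$ is within $\eps' = 1/\poly(m)$ of the best conjunction on $B_i$. Either the learner is applied to a reweighted/one-sided objective, for example labelling by $c$ and measuring $\E[h\cdot c]$ adjusted by the bias, or it is applied directly, because the best single-block conjunction already beats the best constant by $\Omega(1/m)$ in agreement.
\item Check that the error $\eps'$ of order $1/\poly(m)$ keeps the complexity at $2^{\wt O(m^{2/3})}=2^{\wt O(n^{2/9})}$, and that the tolerance stays at $2^{-\wt O(n^{2/9})}$.
\item Estimate each candidate's accuracy by a query and output the best one.
\item Boost this weak learner, with advantage $\gamma = 1/\poly(m)$, to error $0.01$ using \Cref{thm:boosting}.
\end{enumerate}
Boosting adds only $\poly(m)$ overhead.

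The main obstacle is the weak-learning step. We must show that some conjunction on a single block has agreement with $c$ exceeding $\max(\Pr[c=1],\Pr[c=-1])$ by $1/\poly(m)$. Using $T_i$ alone can err on the mass where $c=1$ but $T_i=0$, so it may be worse than a constant. My first attempt is to compare against $\Pr[c=-1]$ via the hypothesis ``$T_i$ or else predict false'' and bound the advantage by $\Pr[T_i]\ge \Pr[c=1]/m$. If that fails, I would pass to the weighted agnostic objective $\E[(1+c)T_i]$-type correlation, which the regression-based learners also handle. In that case the algorithm outputs $h=T'_i$ with false as the default, and the advantage over the all-false hypothesis is exactly $\Pr[T'_i\wedge c]-\Pr[T'_i\wedge \neg c]$, which is at least $\Pr[c=1]/m-\eps'$.
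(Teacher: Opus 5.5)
\textbf{Part (ii): the weak-learning step has a gap.} The analysis is never closed, and the goal you set for it is the wrong one. You say you must find a single-block conjunction whose agreement with $c$ exceeds $\max(\Pr[c=1],\Pr[c=-1])$ by $1/\mathrm{poly}(m)$, and in step (1) you assert that the best one does. That is false in general. Take $T_i=x_{i1}$, and put mass $0.1$ on $\mathbf 0$ and mass $0.9/m$ on each point $e_{i1}$ whose only $1$ is coordinate $x_{i1}$. Any function of $x|_{B_i}$ takes a single value on everything except $e_{i1}$. So for $m\ge 2$ none beats the constant $+1$, whose accuracy is $0.9$.

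Beating the best constant is also unnecessary: a weak learner only needs accuracy $1/2+\Omega(1/m)$. Your case split at $0.99$ is too coarse as well. In this same example both classes have mass bounded away from $0$, yet $\Pr[T_i=c]=0.1+0.9/m<1/2$. So the bound $\Pr[T_i=c]\ge\Pr[c=-1]+\Pr[c=1]/m$, which is all your argument extracts, does not exceed $1/2$ in that regime.

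The missing step is the paper's split at $1/2-1/(10m)$:
\begin{itemize}
\item If $\Pr[c=1]$ or $\Pr[c=-1]$ is below this threshold, a constant has advantage $1/(10m)$.
\item Otherwise pick $i$ with $\Pr[T_i=1]\ge\Pr[c=1]/m\ge 1/(3m)$. Since $T_i\Rightarrow c$, you get exactly $\Pr[T_i=c]=\Pr[c=-1]+\Pr[T_i=1]\ge 1/2+1/(5m)$.
\item The ordinary agnostic guarantee $\Pr[h_i\ne c]\le\Pr[T_i\ne c]+\eta$ with $\eta=0.01/m$ then gives $h_i$ error at most $1/2-1/(6m)$.
\end{itemize}
No reweighted objective is needed, and $h_i$ need not be a conjunction. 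Your final bound, advantage at least $\Pr[c=1]/m-\eta$ over the all-false hypothesis, is correct. What you lacked is precisely this case analysis combining it with the constant $+1$.

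\textbf{Part (i): the reduction is the paper's, but you misstate \cite{RS}.} Your $M(x,y)=\bigvee_i\bigwedge_j(x_{ij}\wedge y_{ij})$ equals $\bigvee_i(X_i\wedge Y_i)$ with $X_i=\bigwedge_j x_{ij}$ and $Y_i=\bigwedge_j y_{ij}$. That is the complement of disjointness on $m$ bits, with sign rank at most $m+1$. The theorem actually concerns $f_m=\bigwedge_i\bigvee_j(x_{ij}\wedge y_{ij})$.

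Your identity $c_y=M(\cdot,y)$ also fails, and no global sign repairs it. For $y=\mathbf 0$ every $T_i$ is empty, so $c_y$ is identically true, while $M(\cdot,\mathbf 0)$ is identically false. For $y=\mathbf 1$ the two agree.

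The family you wrote down does work, though. You have $c_y(x)=\bigvee_i\bigwedge_j(x_{ij}\vee\overline{y_{ij}})=\neg f_m(\overline{x},y)$. Its matrix is the negated Razborov--Sherstov matrix with rows permuted by $x\mapsto\overline{x}$, so it has the same sign rank. The paper instead uses negated literals $\overline{x_{ij}}$ to obtain $-f_m(x,y)$ directly.
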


Note that such a theorem would immediately imply \Cref{thm:main} as we can simply ``scale down'' the class to be on $\wt{O}(\log^{9/2}(n))$ variables.

\subsection{Dimension Complexity Lower Bound}
We start by proving the lower bound on the dimension complexity, which follows from a result of Razborov and Sherstov. Before stating the result, we first recall the definition of sign rank.

\begin{definition}[Sign Rank]
	Given a matrix $A \in \mathbb{R}^{N \times N}$, the sign rank of $A$ is the smallest value $r$ such that there exists a matrix $B$ of rank $r$ satisfying $\sign(B_{ij}) = \sign(A_{ij})$ for all $i,j \in [N]$. 
\end{definition}

\begin{theorem}[Razborov--Sherstov \cite{RS}]
\label{thm:rs}
For $m\ge1$, define
\[
   f_m(x,y)=\bigwedge_{i=1}^{m}\;\bigvee_{j=1}^{m^2}(x_{ij}\wedge y_{ij}),
   \qquad x,y\in \bits^{m^3}.
\]
Then the sign matrix $[f_m(x,y)]_{x,y}$ has sign-rank
$2^{\Omega(m)}$.
\end{theorem}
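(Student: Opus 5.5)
The statement is the Razborov--Sherstov theorem, which the paper uses as a black box. Were I to reprove it, I would follow their route: a \emph{smooth} dual witness for the Minsky--Papert function, lifted through Sherstov's pattern matrix method, combined with a robust version of Forster's sign-rank lower bound.

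\textbf{Step 1 (smooth orthogonalizing distribution).} Let $F(z)=\bigwedge_{i=1}^{m}\bigvee_{j=1}^{cm^2} z_{ij}$ on $\Theta(m^3)$ variables. I would construct a function $\psi$ on the inputs of $F$ with three properties:
\begin{itemize}
\item $\psi$ is orthogonal to every polynomial of degree $<d=\Omega(m)$;
\item $\psi(z)F(z)\ge 0$ everywhere, and $\sum_z|\psi(z)|=1$;
\item $\psi$ is smooth: $|\psi(z)|\ge 2^{-O(m)}\cdot 2^{-N}$ on all but a $2^{-\Omega(m)}$ fraction of the domain, where $N$ is the number of variables.
\end{itemize}
The first two properties amount to the Minsky--Papert threshold-degree lower bound. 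The smoothness is new. I would obtain it by mixing a standard dual witness with a correction term built from polynomials supported near the uniform distribution. The mixture is chosen so that it still kills low-degree monomials while putting weight close to uniform on most points. This is the step I expect to be the main obstacle. It requires careful univariate estimates after symmetrizing within each OR block, and the positivity constraint must be kept while smoothing.

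\textbf{Step 2 (embedding a pattern matrix).} I would show that $[f_m(x,y)]$ contains, as a submatrix, the pattern matrix of $F$ on a constant-factor smaller parameter. To do this, split each block of size $m^2$ into groups. Let $x$ select one coordinate per group, with the remaining $x$-coordinates fixed to $0$, and let $y$ supply the values, with negations absorbed by the choice of $y$. Then $\bigvee_j (x_{ij}\wedge y_{ij})$ evaluates exactly to $F$ applied to the selected bits of $y$. Sign rank does not increase under taking submatrices, so it suffices to lower bound the sign rank of this pattern matrix.

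\textbf{Step 3 (spectral bound and Forster).} Let $P$ be the pattern matrix of $\psi$. By Sherstov's spectral computation for pattern matrices, the orthogonality of $\psi$ to degree $<d$ gives $\|P\|\le 2^{-\Omega(d)}\sqrt{\#\text{rows}\cdot\#\text{cols}}\,\cdot\,2^{-N}$ after normalization. By construction, $P$ agrees in sign with the $F$-pattern matrix wherever $P$ is nonzero. I would then apply the Razborov--Sherstov generalization of Forster's theorem. That result says that if a real matrix sign-agrees with $M$ and has entries of magnitude at least $\delta$ outside a small exceptional set, then $\mathrm{sgnrank}(M)\gtrsim \delta\sqrt{\#\text{rows}\cdot\#\text{cols}}/\|P\|$. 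The smoothness from Step 1 supplies $\delta=2^{-O(m)}2^{-N}$ away from a negligible set, and the exceptional set is handled by their robust version of the bound. Choosing constants so that the $2^{-\Omega(d)}$ spectral gain beats the $2^{-O(m)}$ smoothness loss yields sign rank $2^{\Omega(m)}$.
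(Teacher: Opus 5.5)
The paper gives no proof of this statement. It imports it from Razborov--Sherstov, as you note. Your outline is the architecture of their proof: a smooth orthogonalizing distribution for the Minsky--Papert function, Sherstov's pattern-matrix spectral bound, and their version of Forster's theorem that tolerates a few small entries (sign rank $\ge \gamma st/(\|P\|\sqrt{st}+\gamma h)$ for an $s\times t$ matrix $P$ sign-consistent with the target and having $h$ entries of magnitude below $\gamma$). As a sketch in its own right, however, two steps do not hold up as written.

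\textbf{Step 2.} With $x$ a selector and $y$ arbitrary you obtain only the unshifted matrix $[F(y|_V)]_{y,V}$, whereas the pattern matrix is $[F(u|_V\oplus w)]_{u,(V,w)}$. The shift cannot be ``absorbed by the choice of $y$.'' Re-indexing rows by $u\mapsto u\oplus\tilde w$ (with $\tilde w$ equal to $w_g$ throughout group $g$) handles one value of $w$ at a time, but the row index is shared by all columns. Nor is the shift cosmetic:
\begin{itemize}
\item Sherstov's identity $\|P\|^2/(\#\mathrm{rows}\cdot\#\mathrm{cols})=\max_S\hat\psi(S)^2k^{-|S|}$ (block size $k$) comes from the averaging over $w$.
\item For the unshifted matrix the same computation gives $\sum_S\hat\psi(S)^2k^{-|S|}$ in place of the maximum. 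Nothing in Step 1 controls this sum, whereas $\max_S|\hat\psi(S)|\le 2^{-N}$ is automatic.
\end{itemize}
The standard fix is to spend $2k$ coordinates on each variable of $F$, writing $u_v\oplus w=\bigvee_{\ell\in[k],\,b\in\{0,1\}}\bigl([v=\ell,\ w=b]\wedge[u_\ell\neq b]\bigr)$. The column side then encodes both $V$ and $w$, and the row side encodes $u$ together with its complement. This costs only a constant factor in fan-in. The same $k$ is the knob you need in Step 3, since the spectral gain is $k^{-d/2}$ and must beat the fixed $2^{-O(m)}$ smoothness loss.

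\textbf{Step 1.} This is where the entire difficulty lies; the first two bullets are just the Minsky--Papert threshold-degree bound via LP duality. ``Mixing a standard dual witness with a correction term near uniform'' describes the goal rather than giving a construction. The obstruction is concrete:
\begin{itemize}
\item $\psi F\ge 0$ and $\psi\perp 1$ force exactly half of $\|\psi\|_1$ onto $F^{-1}(\mathrm{false})$, a set of uniform measure at most $m2^{-\Omega(m^2)}$.
\item So the near-uniform piece, of total mass $2^{-O(m)}$, must be balanced by additional mass on that tiny set.
\item That additional mass must itself be moment-matched up to degree $\Omega(m)$ without violating the sign constraints anywhere.
\end{itemize}
Razborov and Sherstov carry this out by an explicit construction on the vector of block Hamming weights, and it is the technical heart of their paper. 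Until you supply that construction, or simply cite it as the paper does, your argument is complete only modulo this step.
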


This will then immediately yield our lower bound on the dimension complexity of the class $\calC$. 

\begin{claim}
	$\dc(\calC) = 2^{\Omega(n^{1/3})}$
\end{claim}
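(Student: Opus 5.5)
We reduce dimension complexity to sign rank. Suppose $\Phi_1,\dots,\Phi_r$ witness $\dc(\calC)=r$, so for every $c\in\calC$ there are weights $w^{(c)}\in\mathbb{R}^r$ with $c(x)=\sign\bigl(\sum_i w^{(c)}_i\Phi_i(x)\bigr)$ for all $x$. Then for any family $\{c_y\}_{y}\subseteq\calC$ indexed by some set $Y$, the matrix $B_{y,x}=\sum_i w^{(c_y)}_i\Phi_i(x)$ factors as $W\cdot \Phi$ with inner dimension $r$, so $\operatorname{rank}(B)\le r$. It sign-represents the matrix $[c_y(x)]_{y,x}$, and none of its entries are zero. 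Hence $r$ is at least the sign rank of every such submatrix. It therefore suffices to embed the Razborov--Sherstov matrix of \Cref{thm:rs} (or its negation, which has the same sign rank) as a family of functions in $\calC$, and to handle any needed substitution on the inputs.

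\textbf{Embedding.} Negate $f_m$ and use De Morgan:
\[\neg f_m(x,y)=\bigvee_{i=1}^m\bigwedge_{j=1}^{m^2}\bigl(\neg x_{ij}\vee\neg y_{ij}\bigr).\]
Fix $y\in\bits^{m^3}$. The inner conjunction becomes $\bigwedge_{j: y_{ij}=1}\neg x_{ij}$, since coordinates with $y_{ij}=0$ contribute the constant true. This is a conjunction of literals over the coordinates of block $B_i$, possibly empty. So $c_y(x)=\neg f_m(x,y)$ is exactly a DNF with one term per block, with $n=m^3$ coordinates identified as $x_{ij}\leftrightarrow$ the $j$-th coordinate of $B_i$. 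Thus $c_y\in\calC$ for every $y$, and the matrix $[c_y(x)]_{y,x}$ is $-[f_m(x,y)]^{\top}$, up to the paper's $\pm1$ encoding convention. Negation and transposition preserve sign rank, so this matrix has sign rank $2^{\Omega(m)}$ by \Cref{thm:rs}.

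Combining the two steps gives $\dc(\calC)\ge 2^{\Omega(m)}=2^{\Omega(n^{1/3})}$, since $n=m\cdot m^2=m^3$. The main point needing care is not any computation but getting the encoding conventions right. The Boolean $\wedge,\vee$ act on $\{0,1\}$ (or on $\pm1$ with a fixed identification of true), and this must match the $\pm1$ sign convention in the definition of dimension complexity. Any mismatch only flips the global sign or complements the inputs $x$, and complementing inputs is a permutation of the columns. Both operations preserve sign rank. One must also check that the representing matrix has no zero entries so that sign rank applies; this holds because $\sign(\cdot)$ equals $c(x)\in\{\pm1\}$ exactly.
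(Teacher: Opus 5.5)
Your proposal is correct and follows essentially the same route as the paper. You negate $f_m$ via De Morgan, observe that for each fixed $y$ the function $x \mapsto -f_m(x,y)$ is a one-term-per-block DNF in $\mathcal{C}$, and note that a dimension-$r$ feature representation yields a rank-$r$ matrix sign-representing $-[f_m(x,y)]$, so the Razborov--Sherstov theorem gives $r \ge 2^{\Omega(m)} = 2^{\Omega(n^{1/3})}$. Your extra checks (the explicit form $\bigwedge_{j: y_{ij}=1}\neg x_{ij}$ of each term, invariance of sign rank under negation, transposition and input complementation, and nonzero entries) are correct details that the paper leaves implicit.
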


Notably, this claim is essentially shown in \cite{RS}, but we include a proof for completeness.

\begin{proof}
	Let
		\[g_y(x) = -f_m(x,y) = \bigvee_{i=1}^{m}\;\bigwedge_{j=1}^{m^2} (\overline{x_{ij}} \lor \overline{y_{ij}})\]
	where $x_{ij}$ denotes the $j$th coordinate in block $B_i$. We can now observe that for any fixed $y \in \bits^n$ we have $g_y(x) \in \calC$. Now let $\phi: \bits^n \rightarrow \R^{\dc(\calC)}$ be such that
		\[g_y(x) = \sign(\langle w_y, \phi(x) \rangle) \]
	for some $w_y \in \mathbb{R}^{\dc(\calC)}$. We then immediately have that the sign rank of $[g_y(x)]_{x,y} = -[f_m(x,y)]_{x,y}$ is at most $\dc(\calC)$. Applying \Cref{thm:rs} then completes the proof.
\end{proof}

\subsection{Statistical Query Upper Bound}
For the upper bound, we rely on recent results for agnostically learning conjunctions \cite{DKR,CPS}. In particular, it will be convenient to use the result of \cite{DKR} as they cast their result in the SQ model. 

\begin{theorem}[Diakonikolas et al. \cite{DKR}]
\label{thm:agnostic-conj}
For every $\eta\in(0,1)$, there exists a distribution-free agnostic SQ learner for conjunctions over $\bits^n$ that uses $2^{n^{1/3} \polylog(n, 1/\eta)}$ queries of tolerance $2^{-n^{1/3}\polylog(n, 1/\eta)}$ in time $2^{n^{1/3} \polylog(n, 1/\eta)}$ and learns to error $\opt + \eta$.
\end{theorem}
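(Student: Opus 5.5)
This statement is imported from \cite{DKR}, and the note uses it as a black box. Still, here is how I would reconstruct it. I would not attempt a self-contained argument, so several steps below are guesses at how the cited proof goes.

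\textbf{Reduction to weak agnostic learning.} First, reduce the task to \emph{weak} agnostic learning. By distribution-free agnostic boosting (in the style of Kalai--Kanade--Mansour, which is implementable with statistical queries), it suffices to do the following for every distribution $\calD'$ that the booster produces by reweighting $\calD$: find a hypothesis whose correlation with the labels is at least $\mathrm{corr}(\text{best conjunction}) - \eta'$. Here $\eta' = \poly(\eta)$, and the booster runs for $\poly(1/\eta)$ rounds. Each round will be run by $L_1$ polynomial regression over a chosen feature space. The regression is implementable with SQs by estimating the correlations $\E[\chi(\bx) y]$ and $\E[\chi(\bx)\chi'(\bx)]$ of all features to tolerance $1/(\#\text{features})$. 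This accounts for the $2^{\wt O(n^{1/3})}$ query count and tolerance.

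\textbf{Removing rare literals.} The difficulty is that no fixed low-degree polynomial approximates every conjunction in $L_1$ under every distribution. The standard degree-$\sqrt{k}$ Chebyshev approximation of an AND of $k$ literals is only accurate on inputs that satisfy all but about $n^{1/3}$ of the literals. So the next step is a data-dependent preprocessing that handles the other inputs. Each literal $\ell$ that is false with probability at least $\theta$ under $\calD'$ can be detected with one SQ. Then consider the inputs on which more than $t \approx n^{1/3}$ literals of the target conjunction fail. I would try to show that either these inputs have small mass, or they can be absorbed into a decision-list-style outer layer of ``if $\ell$ is false output $-1$'' rules. That layer is built greedily, in the spirit of the Klivans--Servedio decomposition of DNFs, by peeling off high-weight literals. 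After the peeling, the relevant mass is concentrated on inputs where at most $t$ literals of any candidate conjunction fail.

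\textbf{Approximation step.} On the remaining region, apply the Chebyshev construction. A degree-$\wt O(\sqrt{k}) \le \wt O(n^{1/3})$ polynomial approximates $\mathrm{AND}$ uniformly to within $\eta'$ on inputs with at most $t$ failing literals, provided $t\sqrt{}$-scaling is balanced by choosing $t = n^{1/3}$ and $k \le n^{2/3}$. Wider conjunctions would be handled by the peeling step, since they are nearly constant on the surviving mass. $L_1$ regression over the $n^{\wt O(n^{1/3})}$ monomials, combined with the outer rules, then gives the weak agnostic guarantee. Boosting lifts this to $\opt+\eta$.

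\textbf{Main obstacle.} I expect the second step to be the hard part. The peeling has to be done agnostically, with noisy labels and under an adversarial, reweighted distribution, without losing more than $\eta'$ in correlation against the unknown optimal conjunction. Tuning the threshold $t$ against the Chebyshev degree is also delicate, and it is exactly what produces the $n^{1/3}$ exponent. What I would try first is an inductive potential argument: each peeling round either makes progress on correlation or removes a constant fraction of the literals' failure mass, so only $\wt O(n^{1/3})$ rounds are needed.
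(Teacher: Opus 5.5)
The paper does not prove this statement. It imports it from \cite{DKR} and uses it only as a black box inside the weak learner, so your opening sentence matches the paper's treatment exactly. The reconstruction you then sketch has a genuine gap, and it sits in the peeling step you flag.

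\emph{Wrong rules.} A rule ``if $\ell$ is false output $-1$'' agrees with the optimal conjunction $c^*=\bigwedge_{\ell\in S}\ell$ only when $\ell\in S$. A greedy choice driven by how often $\ell$ is false cannot certify that. In the example below, each $\bar x_i$ is false with probability $1-1/(2n)$, so it is peeled first, and its rule sends the input satisfying $c^*$, which carries half the mass, to $-1$.

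\emph{Wrong region.} The condition ``at most $t$ literals of any candidate conjunction fail'' can never hold, since every input falsifies exactly $n$ of the $2n$ literals. The region where the Chebyshev approximation is valid is defined through the unknown $S$. This matters because the degree-$\widetilde{O}(\sqrt{t})$ approximator that is accurate up to $t$ failing literals has magnitude $2^{\Omega(\sqrt{t})}$ once $2t$ literals fail. So knowing that the leftover high-failure inputs have mass at most $\eta$ does not bound the $L_1$ error. Pushing that mass below $2^{-\Omega(\sqrt{t})}$ at the guaranteed rate $t/(2n)$ per peeled literal costs about $n/\sqrt{t}$ rounds, which ruins the balance.

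\emph{Wrong parameters.} Degree $\widetilde{O}(\sqrt{t})$ suffices whenever at most $t$ literals fail, whatever the width $k$. The balance is therefore $t=n^{2/3}$ against about $n/t=n^{1/3}$ peeled literals, not $t=n^{1/3}$ with $k\le n^{2/3}$. Also, wide conjunctions need not be nearly constant on the surviving mass. Put mass $1/2$ on the input satisfying $x_1\wedge\cdots\wedge x_n$ and mass $1/(2n)$ on each input falsifying exactly one $x_i$. The conjunction is balanced, each of its literals fails with probability $1/(2n)$, and any $o(n)$ consistent rules leave it essentially balanced.

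One way to close the gap is to branch on membership instead of committing.
\begin{enumerate}
\item Keep an active literal set $U$ (initially all $2n$ literals) and a region $R$ (initially everything).
\item Among inputs in $R$ that falsify more than $t$ literals of $U$ (an event the algorithm can evaluate without knowing $S$), choose the literal of $U$ falsified on the most mass. By averaging, it covers at least a $t/(2n)$ fraction of that mass.
\item Branch on whether it lies in $S$. ``In'' adds its rule and restricts $R$ to inputs satisfying it, which multiplies the high-failure mass by at most $1-t/(2n)$. ``Out'' deletes it from $U$, which cannot increase that mass.
\item Stop once the mass is at most $\eta$. This caps the number of ``in'' steps at $O((n/t)\log(1/\eta))$, so there are only $n^{O((n/t)\log(1/\eta))}$ branch patterns.
\end{enumerate}
On the pattern that answers every membership question correctly, each rule agrees with $c^*$. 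On the computable region $\{x\in R:\ \text{at most } t \text{ literals of } U \text{ fail}\}$, the target is the conjunction of $S\cap U$ with at most $t$ failing literals. Hence degree-$\widetilde{O}(\sqrt{t}\log(1/\eta))$ $L_1$ regression restricted to that region has a bounded, $\eta$-accurate competitor, and the leftover mass costs at most $\eta$. Taking $t=n^{2/3}$ and validating each pattern's hypothesis with one query gives $2^{\widetilde{O}(n^{1/3}\log(1/\eta))}$.

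Two smaller points. Your boosting layer is vacuous, since the ``weak'' guarantee you require is already correlation within $\eta'$ of the best conjunction. And estimating $\mathbb{E}[\chi(x)y]$ and $\mathbb{E}[\chi(x)\chi'(x)]$ implements $L_2$ regression, not $L_1$ regression; the latter needs a statistical-query convex-optimization routine instead.
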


We now turn to prove that $\calC$ can be learned with few queries.

\begin{claim}
$\calC$ can be learned to error $0.01$ using $2^{\wt{O}(n^{2/9})}$ queries of tolerance $2^{-\wt{O}(n^{2/9})}$. 
\end{claim}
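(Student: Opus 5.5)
The plan is to obtain a weak learner for $\calC$ by agnostically learning a single term, and then to boost it with \Cref{thm:boosting}. Write $n = m^3$, so each block has $m^2 = n^{2/3}$ coordinates. Fix a target $c = T_1 \vee \dots \vee T_m \in \calC$ and a distribution $\calD$.

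\textbf{Weak correlation with one term.} First I will show that some hypothesis of the form ``a conjunction on one block'' has advantage $\Omega(1/m)$ over random guessing, under $\calD$ (or a trivially modified distribution). There are two cases.
\begin{itemize}
\item If $\Pr_\calD[c = -1] \ge 1/2 + 1/(4m)$, or $\Pr_\calD[c = 1] \ge 1/2 + 1/(4m)$, a constant hypothesis already has advantage $\Omega(1/m)$. Which constant to use can be decided with one query of tolerance $O(1/m)$.
\item Otherwise $\Pr[c=1] \approx 1/2$. Since $\{c = 1\} = \bigcup_i \{T_i = 1\}$, a union bound gives some $i$ with $\Pr[T_i = 1] \ge 1/(2m) - O(1/m^2)$. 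Then the hypothesis $T_i$ errs only on $\{c=1, T_i=-1\}$, which has probability $\Pr[c=1] - \Pr[T_i=1] \le 1/2 + 1/(4m) - 1/(2m) + \dots$. So its error is $\le 1/2 - \Omega(1/m)$.
\end{itemize}
This means $\opt \le 1/2 - \Omega(1/m)$ for the agnostic problem of fitting $f$ by conjunctions on block $B_i$.

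\textbf{Applying the agnostic learner.} For each $i \in [m]$ I will run \Cref{thm:agnostic-conj} on the $m^2$ coordinates of block $B_i$ with $\eta = 1/(100m)$. The conjunction learner only reads $x|_{B_i}$ and the label, and any SQ on that projected distribution is an SQ on $\calD$, so this simulation is valid. The cost is $2^{(m^2)^{1/3}\polylog(m)} = 2^{\wt O(n^{2/9})}$ queries, at tolerance $2^{-\wt O(n^{2/9})}$. Each run outputs a hypothesis $h_i$.

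\textbf{Selecting a hypothesis and boosting.} Using one extra query per candidate, I will estimate the error of each $h_i$ and of the two constants to accuracy $1/(100m)$, and output the best. By the case analysis, the chosen hypothesis has advantage $\gamma = \Omega(1/m)$. \Cref{thm:boosting} with $\eps = 0.01$ then gives an error-$0.01$ learner whose query count is multiplied by $\poly(m)$ and whose tolerance is divided by a constant. Both remain within $2^{\wt O(n^{2/9})}$ and $2^{-\wt O(n^{2/9})}$ respectively.

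\textbf{Main obstacle.} The delicate step is the case analysis guaranteeing $\opt \le 1/2 - \Omega(1/m)$ for a single block. The agnostic guarantee $\opt + \eta$ is only useful if $\eta$ is set below that advantage. This is why $\eta = 1/\poly(m)$ is needed, and that choice is affordable only because the dependence on $\eta$ in \Cref{thm:agnostic-conj} is polylogarithmic.
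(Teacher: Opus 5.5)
Your proposal is correct and follows essentially the same route as the paper. Like the paper, you run the agnostic conjunction learner on each block with $\eta=\Theta(1/m)$ and select the best of $h_1,\dots,h_m$ and the two constants. Your case analysis is the paper's: a constant wins unless $\Pr[c=1]\approx 1/2$, and otherwise a union bound gives a term with $\Pr[T_i=1]\ge 1/(2m)-O(1/m^2)$ and hence advantage $\Omega(1/m)$. You then boost via Aslam--Decatur, and the only differences from the paper are cosmetic constants and bounding the error of $T_i$ directly rather than its agreement $\Pr[c=-1]+\Pr[T_i=1]$.
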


\begin{proof}
We prove the claim by giving a weak SQ learning algorithm and then apply boosting to finish the proof. Our weak learner is relatively simple. For each $i \in [m]$, we run the agnostic conjunction learner from \Cref{thm:agnostic-conj} with $\eta = 0.01 \cdot m^{-1}$ on examples $(x|_{B_i},y)$ yielding the hypothesis $h_i$. We then output the function among $h_1, \dots, h_m, -1,1$ with the largest advantage, where $-1,1$ denote constant functions. (We evaluate the advantage of each function to additive accuracy $0.01m^{-1}$ with a statistical query.)

Since we run the agnostic learning algorithm on $\bits^{B_i}$, it follows that the procedure runs in time $2^{(m^2)^{1/3} \cdot \polylog(n,m)} = 2^{\wt{O}(n^{2/9})}$ and satisfies the desired query and tolerance bounds. As such, it suffices to show that some function has advantage $0.05 \cdot m^{-1}$. Towards this, let
\[c(x)=T_1(x|_{B_1})\vee T_2(x|_{B_2})\vee\cdots\vee T_m(x|_{B_m})\]

We now assume that
	\[\Pr_{\bx \sim \mathcal{D}}[c(\bx) = 1] \geq \frac{1}{2} - \frac{1}{10m} \]
and 
	\[\Pr_{\bx \sim \mathcal{D}}[c(\bx) = -1] \geq \frac{1}{2} - \frac{1}{10m} \]
as otherwise the $-1$ or $+1$ functions respectively has the desired advantage. It now follows that there exists an $i$ such that
	\[\Pr_{\bx \sim \calD} \left[T_i(\bx|_{B_i}) = 1 \right] \geq \frac{1}{3m} \]
We then conclude that
	\[\Pr_{\bx \sim \calD} \left[T_i(\bx|_{B_i}) = c(\bx) \right] \geq \Pr_{\bx \sim \calD} \left[ c(\bx) = -1 \right] + \Pr_{\bx \sim \calD} \left[T_i(\bx|_{B_i}) = 1 \right] \geq \frac{1}{2} + \frac{1}{5m} \]
and thus that
	\[\Pr_{\bx \sim \calD}[h_i(\bx) \not = c(\bx)] \leq \Pr_{\bx \sim \calD} \left[T_i(\bx|_{B_i}) \not = c(\bx) \right] + \eta \leq \frac{1}{2} - \frac{1}{5m} + \eta \leq \frac{1}{2} - \frac{1}{6m}.\]
This completes the proof of the weak learner and boosting via \Cref{thm:boosting} then yields the claim.
\end{proof}

\begin{flushleft}
\bibliographystyle{alpha}
\bibliography{allrefs}
\end{flushleft}

\end{document}